\newcommand{\N}{\mathbb{N}}
\newcommand{\Z}{\mathbb{Z}}
\newcommand{\Q}{\mathbb{Q}}
\newcommand{\yyy}{\textsc{yes }}
\newcommand{\nnn}{\textsc{no }}
\newcommand{\kfact}{\textsc{k-factor }}
\newcommand{\kfactQ}{\textsc{k-factor}}
\newcommand{\kbal}{\textsc{k-equal-factor }}
\newcommand{\constantterm}{\textsc{constant-term }}
\newcommand{\sumcoeff}{\textsc{sum-of-coefficients }}
\newcommand{\nfdp}{\textsc{natural-reducibility }}
\newcommand{\subsum}{\textsc{subset-sum }}
\newcommand{\subsumQ}{\textsc{subset-sum}}
\newcommand{\subprodQ}{\textsc{subset-product}}
\newcommand{\subprod}{\textsc{subset-product }}
\newcommand{\prodpart}{\textsc{product-partition }}
\newcommand{\prodpartQ}{\textsc{product-partition}}
\newcommand{\subsumprod}{\textsc{subset-sum-of-products }}
\newcommand{\subsumprodQ}{\textsc{subset-sum-of-products}}
\newcommand{\eqct}{\textsc{equal-constant-term }}
\newcommand{\eqsc}{\textsc{equal-sum-of-coefficients }}
\newcommand{\fwsc}{\textsc{factor-with-specific-coefficients }}
\newcommand{\fwscQ}{\textsc{factor-with-specific-coefficients}}
\newcommand{\inputt}{\textbf{Input}}
\newcommand{\outputt}{\textbf{Output}}
\providecommand{\U}[1]{\protect\rule{.1in}{.1in}}
\newcommand{\problem}[1]{\emph{#1}} 
\newcounter{exer}
\newcounter{exera}
\theoremstyle{definition}
\newtheorem{theo}{Theorem}[section]
\newtheorem{lem}[theo]{Lemma}
\newtheorem{oss}[theo]{Observation}
\newtheorem{prop}[theo]{Proposition}
\newtheorem{defi}[theo]{Definition}
\newtheorem{remk}[theo]{Remark}
\newtheorem{coro}[theo]{Corollary}
\newtheorem{conv}[theo]{Convention}
\newtheorem{quest}[theo]{Question}
\newtheorem{warn}[theo]{Warning}
\newtheorem{conj}[theo]{Conjecture}
\newtheorem{exam}[theo]{Example}
\newenvironment{example}[1][]
{\begin{exam}[#1]\begin{leftbar}}
{\end{leftbar}\end{exam}}
\newtheorem{exmp}[exer]{Exercise}
\newtheorem{exetwo}[exera]{Additional exercise}
\let\sumnonlimits\sum
\let\prodnonlimits\prod
\let\cupnonlimits\bigcup
\let\capnonlimits\bigcap
\renewcommand{\sum}{\sumnonlimits\limits}
\renewcommand{\prod}{\prodnonlimits\limits}
\renewcommand{\bigcup}{\cupnonlimits\limits}
\renewcommand{\bigcap}{\capnonlimits\limits}
\begin{document}

\title{Hard to Detect  Factors of Univariate Integer Polynomials}
\author{Alberto Dennunzio}
\affil{\small Dipartimento di Informatica, Sistemistica e Comunicazione,
  Università  degli Studi di Milano-Bicocca,
  alberto.dennunzio@unimib.it
}

\author{Enrico Formenti}
\affil{Universit\'e C\^ote d'Azur, 
enrico.formenti@unice.fr
}

\author{Luciano Margara\footnote{Corresponding author. Address:  Department of Computer Science and Engineering, Via dell'Università 50, 47521 Cesena, Italy}}
\affil{Department of Computer Science and Engineering, University of Bologna, 
luciano.margara@unibo.it
}

\date{\today}

\maketitle

\begin{abstract}
We investigate the computational complexity of deciding whether  a given 
univariate integer polynomial $p(x)$ has a factor $q(x)$ satisfying specific
additional constraints. When the only constraint imposed on $q(x)$ is to have a 
degree smaller than the degree of $p(x)$ and greater than zero,
the problem is equivalent to testing the irreducibility of $p(x)$
and then it is solvable in polynomial time. We prove that deciding whether a given 
monic univariate integer polynomial  has factors   
satisfying additional properties 
may lead to NP-complete problems in the strong sense. 

In particular, given any constant value $k\in \Z$,  we prove that it is NP-complete in the strong sense to detect the existence of a factor that returns a 
prescribed value when evaluated at  $x=k$ (Theorem \ref{teo1}) or to detect the existence of a pair of factors 
- whose product is equal to the original polynomial - that return the same value when evaluated at  $x=k$ (Theorem \ref{teo2}).
The list of all the properties we have investigated in this paper  is reported at the end of Section \ref{sec1}.
\end{abstract}

\begin{keywords}Computational complexity, Polynomials, Factorization, NP-completeness, Semirings,
\end{keywords}

\section{Introduction}\label{sec1}
The idea of decomposing a polynomial into the product of  smaller ones
is definitely not new. 
A huge literature has been devoted to the factorization of polynomials (without claim of exhaustiveness see    \cite{VANHOEIJ2002167,Lenstra82factoringpolynomials,DBLP:conf/latin/Kaltofen92}) 
as well as to the decomposition
of other mathematical objects, e.g.
numbers, matrices, graphs and so on.
The basic idea behind factorization is decomposing a complex object into smaller and easier to analyze pieces.
Properties satisfied by each piece might shed some light on the properties satisfied by the entire object.
As an example, from irreducible factors of a polynomial we can recover valuable information about its roots.
%
%

As far as polynomials are concerned, much attention has been dedicated to the problem of factoring them into irreducible polynomials, i.e into no furtherly factorable elements.
The first polynomial factorization algorithm was published by Theodor Von Schubert in 1793 \cite{Schubert}.
Since then, dozens of papers on the computational complexity of polynomial factorization have been published.
In 1982, Arjen K. Lenstra, Hendric W. Lenstra, and László Lovász \cite{Lenstra82factoringpolynomials} published the first polynomial time algorithm for factoring 
polynomials over $\Q$ and then over $\Z$.

When dealing with the computational complexity of problems whose input is a polynomial, it is crucial to specify the way we represent it. The standard way of representing a polynomial 
$p(x)=c_0+c_1 x+\cdots + c_nx^n$  is by giving the list 
$\langle c_0,c_1,\dots ,c_n \rangle$ of its coefficients. In this case the size of the polynomial is proportional to $n$ and does not depend
on the number of zero coefficients in $p(x)$. The other way to represent a polynomial (called lacunary or sparse representation) consists of the list of nonzero monomials. Lacunary representation may lead to an exponentially
shorter representation of the same polynomial with respect to the standard notation. The computational cost of an algorithm can be polynomially bounded in the standard input  size and, at the same time, exponentially large in the lacunary   input size.
 Testing the irreducibility of lacunary polynomials or computing the greatest common divisor of two lacunary polynomials are NP-hard problems 
\cite{Plaisted77,KarpinskiS99,KaltofenK05}. Computing the irreducible factors of bounded degree of lacunary polynomials can be done in polynomial time  \cite{GRENET2016171} as well as computing the integer roots of lacunary  integer polynomials \cite{CUCKER199921}.

In this paper we will use standard notation for polynomials. This makes our NP-completeness results even stronger.

 Irreducible factorization of polynomials, under some conditions, is unique and any other factorization into not necessarily irreducible elements can be obtained by properly grouping suitable irreducible factors.
In this paper we focus our attention on some particular type of factors, not necessarily irreducible, and on the computational complexity of detecting their existence.
We will prove that some type of factors are hard to detect while some others are not.
In other words, we  show that computing irreducible factors of a polynomial can be much easier than computing other type of factors. It turns out, as expected,  that the boundary between polynomially computable factors and 
"hard to compute" ones is far from being  completely understood.

We wish to emphasize that the main aim of this paper is not to provide technically difficult proofs of long standing open problems but rather to
show a different perspective in dealing with polynomial decomposition problems. 

We face the following general problem.
Given an integer polynomial $p(x)\in \Z[x]$ and some specific property $P$,
decide whether $p(x)$ admits one or more factors  that satisfy $P$.

Here is a list of problems we have analyzed in this paper.

\begin{itemize}
\item[Q1.]   Let $k\in \Z$ be any fixed integer. \\
Given  $p(x)\in \Z[x]$ and $h\in\Z$, decide whether there exists
 a factor $q(x)\in\Z[x]$ of $p(x)$ such that $q(k)=h$. (Theorem \ref{teo1})
\item[Q2.]  
Given  $p(x)\in \Z[x]$ and $h\in\Z$,  decide whether there exists
 a factor $q(x)\in\Z[x]$ of $p(x)$ such that the sum of all the coefficients of $q(x)$ is equal to $h$.
(Corollary \ref{sumcoeffcoro})
\item[Q3.]  
Given  $p(x)\in \Z[x]$ and $h\in\Z$,  decide whether there exists
 a factor $q(x)\in\Z[x]$ of $p(x)$ such that the constant term of $q(x)$ is equal to $h$. (Corollary \ref{teoct})
 \item[Q4.] Let $k\in \Z$ be any fixed integer. \\
Given  $p(x)\in \Z[x]$,  decide whether there exists two factors $q(x),r(x) \in\Z[x]$ of $p(x)$ such that
$p(x)=q(x)\cdot r(x)$ and $q(k)=r(k)$. (Theorem \ref{teo2}) 
\item[Q5.] 
Given  $p(x)\in \Z[x]$,  decide whether there exists two factors $q(x),r(x) \in\Z[x]$ of $p(x)$ such that
$p(x)=q(x)\cdot r(x)$ and the sum of all the coefficients of $q(x)$ is equal to 
the sum of all the coefficients of $r(x)$. (Corollary \ref{balancedconsttermcoro})
\item[Q6.] 
Given  $p(x)\in \Z[x]$,  decide whether there exists two factors $q(x),r(x) \in\Z[x]$ of $p(x)$ such that
$p(x)=q(x)\cdot r(x)$ and constant term of $q(x)$ is equal to 
the constant term of $r(x)$. (Corollary \ref{balancedsumcoeffcoro})
\item[Q7.] 
Given  $p(x)\in \Z[x]$,  decide whether there exists two factors $q(x),r(x) \in\N[x]$ of $p(x)$ such that
$p(x)=q(x)\cdot r(x)$. (Question \ref{op1})
\item[Q8.] Given  $p(x)\in \Z[x]$ and $h,m\in\Z$,  decide whether there exists
 a factor $q(x)\in\Z[x]$ of $p(x)$ such that the coefficient of the monomial with degree $m$ in $q(x)$ is equal $h$.
 (Question \ref{op2})
\end{itemize}

The rest of the paper is organized as follows. 
In Section \ref{sec2} we give some basic definitions and known results.
In Section \ref{sec3} we prove that  problems from $Q1$ to $Q6$ are NP-complete in the strong sense. 
In Section \ref{sec4} we introduce and discuss open questions $Q7$ and $Q8$. Section \ref{sec5} contains conclusions.

\section{Definitions and Known Results}\label{sec2}

Let $\Z$ denote the set of integer numbers and $\Z[x]$ the set of integer polynomials (polynomials with 
coefficients in $\Z$). Given two integer polynomials $p(x)$ and $q(x)$ we say that 
$q(x)$ divides $p(x)$ (we write $q(x)|p(x)$) if and only if there exists an integer polynomial $r(x)$ such that 
$p(x)=q(x)\cdot r(x)$. 
The degree of a polynomial $p(x)=c_0+c_1x+\cdots + c_n x^n$ with $c_n\neq 0$ 
(denoted by $deg(p(x)))$  is $n$.
Given an integer polynomial $p(x)$ we say that $q(x) \in  \Z[x]$ is a factor of $p(x)$ if and only if
$q(x) | p(x)$.
An integer polynomial with degree $n$ is reducible if and only if it admits a factor $q(x)$ such that 
 $0<deg(q(x))<n$.
It is irreducible otherwise.
In the rest of the paper we will only consider  monic integer polynomials, i.e., integer polynomials whose leading coefficient (coefficient of the highest degree monomial) is equal to $1$. 

We now introduce some well known NP-complete computational problems that we will use for our reductions.

\begin{defi}[\subsumQ]\label{subsum}
Given $n+1$ positive integers $\langle a_1,\dots,a_n,t\rangle $ 
decide whether there exist $I \subseteq \{1,\dots,n \}$ such that
$$\sum_{i\in I} a_i   =t$$
\end{defi}

\begin{defi}[\subprodQ]\label{sp}
Given $n+1$ positive integers $\langle a_1,\dots,a_n,t\rangle $  
decide whether there exist $I \subseteq \{1,\dots,n \}$ such that
$$\prod_{i\in I} a_i   =t$$
\end{defi}

\begin{defi}[\prodpartQ]\label{pp}
Given $n$ positive integers $\langle a_1,\dots,a_n\rangle $  
decide whether there exists a partition of the set $ \{1,\dots,n \}$
into two nonempty subsets $I$ and $J$ such that 
$$\prod_{i\in I} a_i   =\prod_{j\in J} a_j$$
\end{defi}

The \subsum problem  (problem [SP13], page 224 in \cite{garey1979computers}) has been proved to be NP-complete in \cite{Kar72}. It is solvable in pseudo-polynomial time.
The \subprod problem  (problem [SP14], page 225 in \cite{garey1979computers}) has been proved to be NP-complete in the strong sense in \cite{garey1979computers,yao78}. 
The \prodpart problem has been proved to be NP-complete in the strong sense in \cite{NG2010601}. 
A problem is said to be NP-complete in the strong sense, if it remains NP-complete even when all of its numerical parameters are bounded by a polynomial in the length of the input (see \cite{GJ78} for details).


\section{Hard to detect factors}\label{sec3}
In this section we prove that problems from $Q1$ to $Q6$ are NP-complete in the strong sense.

The following observation completely characterizes any factor of a 
monic univariate integer polynomial with integer roots.

\begin{oss}\label{factorform}
Let $n\geq 1$ and $a_1,\dots, a_n$ be $n$ integers.
Let $p(x)\in \Z[x] $ be the following monic univariate integer polynomial
$$p(x) = \prod_{i=1}^n (x-a_i)$$
An integer polynomial $q(x)\in \Z[x]$, $q(x) \neq 1$,  is a factor of $p(x)$ if and only if 
$$\exists I \subseteq \{1,\dots,n \}: \   q(x) = \prod_{i\in I} (x-a_i)$$
\end{oss}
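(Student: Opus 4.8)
The plan is to prove the two implications separately: the ``if'' direction is immediate, and the ``only if'' direction rests on unique factorization together with the fact that we only consider monic polynomials.

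For the ``if'' direction, given $I \subseteq \{1,\dots,n\}$ with $q(x) = \prod_{i\in I}(x-a_i)$, set $r(x) = \prod_{j \in \{1,\dots,n\}\setminus I}(x-a_j)$. Then $r(x) \in \Z[x]$ and $p(x) = q(x)\, r(x)$, so $q(x)\mid p(x)$.

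For the ``only if'' direction I would work in the ring $\Z[x]$, which is a unique factorization domain (Gauss) whose units are exactly $\pm 1$. Each linear polynomial $x - a_i$ is primitive and has degree $1$, hence is irreducible in $\Z[x]$, so $p(x) = \prod_{i=1}^n (x-a_i)$ is an expression of $p(x)$ as a product of irreducibles. Now suppose $q(x)\in\Z[x]$ is a factor of $p(x)$, say $p(x) = q(x)\, r(x)$ with $r(x)\in\Z[x]$. Factoring $q(x)$ and $r(x)$ into irreducibles and invoking uniqueness of the factorization of $p(x)$, the irreducible factors of $q(x)$ must be, with multiplicity, a sub-multiset of $\{x-a_1,\dots,x-a_n\}$, so $q(x) = u\cdot\prod_{i\in I}(x-a_i)$ for a unit $u\in\{\pm 1\}$ and a corresponding index set $I\subseteq\{1,\dots,n\}$; since $q(x)$ is monic we get $u=1$, giving the claim. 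An alternative route that avoids quoting the structure of $\Z[x]$ is induction on $n$: if $q(x)$ is nonconstant it has a complex root, which is a root of $p(x)$ and hence equals some $a_i$; division by the monic polynomial $x-a_i$ keeps integer coefficients, so $q(x) = (x-a_i)\, q_1(x)$ with $q_1\in\Z[x]$, and cancelling $x-a_i$ in the integral domain $\Z[x]$ from $p(x)=q(x)\,r(x)$ shows $q_1(x)$ divides $\prod_{j\neq i}(x-a_j)$, to which the induction hypothesis applies; the base case is $q(x)$ constant, which forces $q(x)=1$ and $I=\emptyset$.

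The only point requiring care is the bookkeeping when the $a_i$ are not pairwise distinct: ``sub-multiset of the irreducible factors'' must be matched with a genuine subset $I$ of the index set $\{1,\dots,n\}$, and one should keep in mind that different sets $I$ may yield the same polynomial $q(x)$. The hypothesis $q(x)\neq 1$ is a harmless normalization and plays no essential role; one could equally state the equivalence including the case $I=\emptyset$.
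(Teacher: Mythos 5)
Your proof is correct. The paper offers no proof at all for this statement -- it is presented as an unproved Observation -- so there is no argument of the authors' to compare against; your unique-factorization argument (units of $\Z[x]$ are $\pm 1$, each monic linear $x-a_i$ is primitive of degree one and hence irreducible, so any divisor of $p(x)$ is a unit times a sub-product) is exactly the standard justification the authors are implicitly relying on, and your inductive alternative via roots and cancellation in the integral domain $\Z[x]$ is equally valid. One point worth making explicit, which you do handle but only in passing: as literally stated the equivalence is false without the paper's standing convention that all polynomials under consideration are monic, since $-(x-a_1)$ and $-1$ are also divisors of $p(x)$ in $\Z[x]$ but are not of the form $\prod_{i\in I}(x-a_i)$; your step ``since $q(x)$ is monic we get $u=1$'' is therefore not a cosmetic remark but the place where the paper's convention is genuinely needed. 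Your observations about repeated roots (sub-multiset versus index set) and about the $I=\emptyset$ case are likewise correct and are details the paper glosses over.
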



\begin{oss}\label{compcoeff}
Computing all the coefficients of an integer polynomial 
$$p(x)=\prod_{i=1}^n (x-a_i)$$ takes $O(n^2)$ operations.
\end{oss}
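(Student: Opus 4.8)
The plan is to give the obvious incremental multiplication algorithm and to bound its cost term by term. Set $p_0(x)=1$ and, for $j=1,\dots,n$, define $p_j(x)=p_{j-1}(x)\cdot(x-a_j)$, so that $p_n(x)=p(x)$. Since the $a_i$ are integers and each linear factor is monic, every $p_{j-1}(x)$ is a monic integer polynomial of degree $j-1$, which we store as a list of $j$ integer coefficients. Computing all coefficients of $p(x)$ then amounts to performing the $n$ updates $p_0\mapsto p_1\mapsto\cdots\mapsto p_n$.

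The key step is to observe that a single update $p_j(x)=x\cdot p_{j-1}(x)-a_j\cdot p_{j-1}(x)$ costs only $O(j)$ integer operations: multiplying $p_{j-1}(x)$ by $x$ is just a shift of the coefficient list and needs no arithmetic; scaling $p_{j-1}(x)$ by $a_j$ uses $j$ multiplications; and subtracting the two resulting lists, each of length at most $j+1$, uses at most $j+1$ additions. Summing the costs of all $n$ updates gives $\sum_{j=1}^{n}O(j)=O(n^2)$ operations in total, which is the claimed bound. (An alternative but essentially identical presentation is a divide-and-conquer pairing of the factors; I would prefer the linear sweep above since the bound it yields matches the statement exactly and the analysis is a one-line summation.)

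I do not expect a genuine obstacle here; the only point that deserves a sentence is the meaning of "operation." Counting each integer addition and multiplication as a unit, the argument above is complete. If one instead wants bit complexity, one notes that every coefficient of every $p_j(x)$ is bounded in absolute value by $\prod_{i=1}^{n}(1+|a_i|)$; hence in the regime relevant to this paper — where the $a_i$ are bounded by a polynomial in the input length, as in the strong-NP-completeness reductions that follow — all intermediate integers stay of polynomial bit size, so the $O(n^2)$ operation count still translates into an overall polynomial-time computation. This is precisely the use to which the observation will be put later.
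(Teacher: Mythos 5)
Your proof is correct and follows exactly the paper's own argument: the incremental multiplication $p_j(x)=p_{j-1}(x)\cdot(x-a_j)$, with each update costing $O(j)$ operations and the total summing to $O(n^2)$. Your additional remark on bit complexity is a sensible elaboration but does not change the underlying approach.
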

\begin{proof}
Let $k\in \{1,\dots,n-1\}$.
 Let $p_k(x)=\prod_{i=1}^k (x-a_i)$. It is easy to verify that the degree of $p_k(x)$ is equal to $k$ and then the number of coefficients of  $p_k(x)$ is at most $k+1$.
Computing $p_{k+1}(x)$ from $p_k(x)$ takes $O(k)$ operations. Then computing $p_{n}(x)=p(x)$  takes 
$O(n^2)$ operations. 
\end{proof}

\begin{defi}[\kfact problem]\label{kfactproblem}
Let $k\in \Z$ be any fixed integer. The  \kfact  problem is defined as follows.
\begin{tabbing}
\=\inputt: $p(x)\in\Z[x]$ and  $h\in \Z$ \\
\> \outputt:  \= - \yyy \= if $p(x)$ has a factor $q(x)\in\Z[x]$ such that $q(k)=h$, \\
\> \> - \nnn otherwise.
\end{tabbing}
\end{defi}

\begin{theo}\label{teo1}
For any fixed $k\in\Z$, \kfact problem is NP-complete in the strong sense.
\end{theo}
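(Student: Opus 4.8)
The membership in NP is immediate: a certificate is a factor $q(x)$, given by its list of coefficients; by Observation~\ref{factorform} (when $p$ splits over $\Z$) a candidate factor has degree at most $\deg(p)$, so the certificate is polynomial in the input size, and one can verify in polynomial time both that $q(x) \mid p(x)$ (by polynomial division) and that $q(k) = h$. The real work is the hardness reduction. The key idea is to build, from an instance $\langle a_1,\dots,a_n,t\rangle$ of \subprodQ, a monic polynomial whose factors correspond exactly to subsets of $\{1,\dots,n\}$, and to arrange the evaluation point so that the value $q(k)$ records the product of the selected $a_i$. The natural choice is $k = 0$ first, since then $q(0)$ is just the constant term: setting $p(x) = \prod_{i=1}^n (x - a_i)$, Observation~\ref{factorform} tells us every factor is $q(x) = \prod_{i\in I}(x-a_i)$, hence $q(0) = (-1)^{|I|}\prod_{i\in I} a_i$. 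So the existence of a factor with $q(0) = t$ (or $q(0) = -t$, depending on parity bookkeeping) is essentially equivalent to the \subprodQ instance having a solution. One must be a little careful about the sign $(-1)^{|I|}$ and about the empty/full subsets (the trivial factors $1$ and $p(x)$ itself), but these are handled by minor padding — e.g. by introducing extra factors $(x-1)$ to neutralize sign issues, or by observing that a target $t$ distinct from $1$ and from $\prod_i a_i$ already excludes the trivial factors.

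For a general fixed $k \in \Z$, the same construction needs an affine shift so that "evaluate at $k$" plays the role that "evaluate at $0$" played above. I would instead take $p(x) = \prod_{i=1}^n \bigl(x - (k + a_i)\bigr)$, so that a factor corresponding to $I$ evaluates at $x=k$ to $\prod_{i\in I}(k - (k+a_i)) = (-1)^{|I|}\prod_{i\in I} a_i$, and the argument proceeds identically. The coefficients of $p(x)$ are computable in $O(n^2)$ arithmetic operations by Observation~\ref{compcoeff}, and — crucially for the strong-NP-completeness claim — since \subprodQ is NP-complete in the strong sense we may assume the $a_i$ and $t$ are bounded by a polynomial in $n$; the roots $k + a_i$ are then also polynomially bounded, and the coefficients of $p(x)$, being signed elementary symmetric functions in at most $n$ such bounded quantities, have bit-length polynomial in $n$ (polynomially many terms, each a product of $\le n$ polynomially-bounded integers). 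Hence the whole reduction is polynomial and all numerical parameters stay polynomially bounded, giving strong NP-completeness. I would set the \kfactQ target to $h = \pm t$ with the sign chosen to match a clean parity convention.

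The main obstacle I anticipate is the bookkeeping around the sign $(-1)^{|I|}$ and the exclusion of the two trivial factors $q(x)=1$ and $q(x)=p(x)$. One clean fix: WLOG assume the \subprodQ instance has $t \neq 1$ and $t \neq \prod_{i=1}^n a_i$ (instances violating this are trivially decidable, or can be ruled out by a preliminary step), which kills the $q=1$ and $q=p$ cases; for the sign, either absorb it by asking whether $p$ has a factor with $q(k) \in \{t, -t\}$ — then note that a factor $q_I$ and its complementary factor $q_{J}$ (with $I \sqcup J = \{1,\dots,n\}$, so $q_I q_J = p$) have evaluations whose product is the fixed number $p(k) = (-1)^n \prod_i a_i$, letting one recover a fixed-sign target — or, most simply, append one extra linear factor $(x-(k+1))$ or $(x-(k-1))$ to $p(x)$ and add a dummy element to the \subprodQ instance so that for every subset there is a sign-corrected counterpart achieving exactly $t$. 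Any of these is a routine patch; the conceptual content is the correspondence (factor $\leftrightarrow$ subset, evaluation $\leftrightarrow$ product) furnished by Observation~\ref{factorform}.
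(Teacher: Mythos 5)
Your reduction is essentially the paper's: both reduce from \subprodQ by encoding each $a_i$ as a linear factor of a monic polynomial that splits over $\Z$ and reading off the subset product as the value of the corresponding factor at $x=k$, via Observation \ref{factorform}. The only real difference is that the paper places the roots at $k-a_i$ (i.e., uses the factors $x+(a_i-k)$), so that $q(k)=\prod_{i\in I}a_i$ holds exactly and the sign $(-1)^{|I|}$ you spend a paragraph patching never arises; your dummy-element fix is nevertheless sound (your alternative fix via $q(k)\in\{t,-t\}$ would not be a many-one reduction as stated).
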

\begin{proof}
Let $k\in \Z$ be any fixed integer.
We reduce the  \subprod problem (Definition \ref{sp})  to the \kfact problem.\\
Let  $\langle a_1,\dots,a_n,t \rangle$ be any instance of \subprodQ. Let 
$b_i=a_i-k$ for $i=1,\dots,n$ and 
$$p(x)= \prod_{i=1}^n (x+b_i) $$
Let $\langle p(x),t\rangle$  be the corresponding instance of \kfactQ. 

$\langle p(x),t\rangle$ is a \yyy instance of \kfact if and only if there exists a factor
 $q(x)\in \Z[x]$ of $p(x)$ such that $q(k)=t$. Or equivalently, by Observation \ref{factorform}, if and only if
\begin{equation}\label{eq:1000}
\exists I \subseteq \{1,\dots,n \}: \quad q(x) = \prod_{i\in I} (x+b_i)\text{ and } q(k)=t
\end{equation}
Equation (\ref{eq:1000}) is true if and only if 
\begin{equation}\label{eq:200}
\exists I \subseteq \{1,\dots,n \}: \quad \prod_{i\in I} (k+b_i)=t
\end{equation}
Since
\begin{eqnarray*}
\prod_{i\in I} (k+b_i) 
&=& \prod_{i\in I} (k+a_i-k)\\
&=& \prod_{i\in I} a_i
\end{eqnarray*}
we conclude that $q(k)=t$ if and only if $\prod_{i\in I} a_i=t$. This is true
if and only if $\langle a_1,\dots,a_n,t \rangle$ is a \yyy instance of \subprodQ.

\end{proof}

\begin{oss}\label{iroots}
It is easy to check (directly from the proof of Theorem \ref{teo1})  that the \kfact problem remains NP-complete in the strong sense even if 
we restrict the set of input polynomials to 
 monic polynomials with all integer roots.
\end{oss}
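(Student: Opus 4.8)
The plan is simply to revisit the reduction constructed in the proof of Theorem \ref{teo1} and to observe that it already produces instances of the restricted form, so that no new reduction is needed. Concretely, I would recall that, starting from an instance $\langle a_1,\dots,a_n,t\rangle$ of \subprodQ, that reduction builds the polynomial $p(x)=\prod_{i=1}^n(x+b_i)$ with $b_i=a_i-k$, together with the target value $h=t$. The key point is that this $p(x)$ is monic by construction and is already presented as a product of linear factors over $\Z$, so its roots are exactly the integers $-b_1,\dots,-b_n$, i.e. $k-a_1,\dots,k-a_n$. Hence every instance in the image of the reduction is a monic integer polynomial with all integer roots, and by Observation \ref{compcoeff} its coefficient list is still computable from $\langle a_1,\dots,a_n,t\rangle$ in polynomial time.

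Next I would note that passing to this restricted class of inputs does not affect membership in NP: the general \kfact problem is in NP, since a factor of a monic degree-$n$ polynomial has coefficients of bit-length polynomial in that of $p(x)$, and the conditions $q(x)\mid p(x)$ and $q(k)=h$ are checkable in polynomial time, while restricting to a subclass of inputs can only shrink the language. Therefore the map used in Theorem \ref{teo1} is, without any change, a polynomial-time many-one reduction from \subprodQ to the restricted version of \kfactQ; since it is literally the same reduction, every ingredient that yielded the conclusion ``NP-complete in the strong sense'' there yields it here as well.

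I do not expect any real obstacle. The only thing worth a second glance is the behaviour of the construction in degenerate cases: when $a_i=k$ the factor $(x+b_i)$ is just $x$ (root $0$), and when $a_i<k$ the root $k-a_i$ is positive rather than negative; but in all cases the roots remain integers and $p(x)$ remains monic, so the argument is valid verbatim for every fixed $k\in\Z$.
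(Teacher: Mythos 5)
Your proposal is correct and is exactly the argument the paper intends: the reduction in Theorem \ref{teo1} already outputs monic polynomials of the form $\prod_{i=1}^n(x+b_i)$ with integer roots $k-a_i$, so the same reduction establishes strong NP-completeness for the restricted class. The additional remarks on NP membership and degenerate cases are fine but not needed beyond what the paper's one-line justification already asserts.
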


\begin{defi}[\sumcoeff problem]\label{sumc}
 The  \sumcoeff  problem is defined as follows.

\begin{tabbing}
\=\inputt: $p(x)\in\Z[x]$ and  $s\in \Z$ \\
\> \outputt:  \= - \yyy \= if $p(x)$ has a factor $q(x)\in\Z[x]$ such that \\
\> \> \>the sum 
of all the coefficients of $q(x)$ is equal to $s$, \\
\> \> - \nnn otherwise.
\end{tabbing}
\end{defi}

\begin{coro}\label{sumcoeffcoro}
The  \sumcoeff problem is  NP-complete in the strong sense. 
\end{coro}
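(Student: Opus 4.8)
The plan is to observe that the \sumcoeff problem is nothing but the special case $k=1$ of the \kfact problem, so the statement follows at once from Theorem \ref{teo1}. First I would note that for any $q(x)=c_0+c_1x+\cdots+c_mx^m\in\Z[x]$ the sum of all of its coefficients equals $q(1)$. Hence $p(x)$ has a factor whose coefficients sum to $s$ if and only if $p(x)$ has a factor $q(x)$ with $q(1)=s$; that is, $\langle p(x),s\rangle$ is a \yyy instance of \sumcoeffQ{} exactly when it is a \yyy instance of \kfactQ{} with the fixed integer $k=1$. This identification is itself a (trivial) polynomial-time reduction that leaves every numerical parameter unchanged, so by Theorem \ref{teo1} — which holds for every fixed $k$, in particular $k=1$ — the \sumcoeff problem is NP-complete in the strong sense.

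For concreteness, I would also spell out the reduction directly rather than merely quote the specialization. Given a \subprod instance $\langle a_1,\dots,a_n,t\rangle$, put $b_i=a_i-1$ and $p(x)=\prod_{i=1}^n(x+b_i)$. By Observation \ref{factorform}, the factors $q(x)\neq 1$ of $p(x)$ are exactly the products $\prod_{i\in I}(x+b_i)$ over $I\subseteq\{1,\dots,n\}$, and the sum of the coefficients of such a $q(x)$ is $q(1)=\prod_{i\in I}(1+b_i)=\prod_{i\in I}a_i$. Therefore $p(x)$ admits a factor whose coefficient sum equals $t$ if and only if some subproduct of the $a_i$ equals $t$, i.e. if and only if the original instance is a \yyy instance of \subprodQ{}. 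Membership in NP is inherited from Theorem \ref{teo1}: a certificate is a subset $I$, and $q(x)=\prod_{i\in I}(x+b_i)$ together with the quotient $p(x)/q(x)$ can be computed and checked in polynomial time by Observation \ref{compcoeff}, after which one verifies $q(1)=t$.

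The main obstacle here is essentially nonexistent: the only things to verify are that evaluation at $x=1$ faithfully computes the sum of coefficients and that specializing $k$ to $1$ in Theorem \ref{teo1} is legitimate, both of which are immediate. The "strong sense" qualifier survives because no numerical parameter is inflated by the reduction — the $b_i$ differ from the $a_i$ by the constant $1$, and the coefficients of $p(x)$ are bounded polynomially in the input size whenever the $a_i$ are, exactly as in the proof of Theorem \ref{teo1}.
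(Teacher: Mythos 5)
Your proof is correct and follows the paper's own argument exactly: the paper likewise reduces \kfact with $k=1$ to \sumcoeff via the identity ``sum of coefficients of $q(x)$ equals $q(1)$'' and invokes Theorem \ref{teo1}. The extra paragraph in which you unfold the underlying \subprod reduction is just the composition of the two reductions already in the paper and adds nothing new, though it is harmless.
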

\begin{proof}
We prove this result as a Corollary of Theorem \ref{teo1}. 
We reduce the \kfact problem with $k=1$ (NP-complete in the strong sense by Theorem \ref{teo1})
to the  \sumcoeff problem.

Let  $q(x)$ be any factor of $p(x)$.
Since $q(1)$ is equal to the sum of all the coefficients of $q(x)$ we conclude that $q(1)=h$ if and only if the 
sum of all the coefficients of
$q(x)$ is equal to $h$.
\end{proof}

\begin{defi}\label{ctproblem}
 The  \constantterm  problem is defined as follows.

\begin{tabbing}
\=\inputt: $p(x)\in\Z[x]$ and  $t\in \Z$ \\
\> \outputt:  \= - \yyy \= if $p(x)$ has a factor $q(x)\in\Z[x]$ such that \\
\> \> \>the constant term of $q(x)$ is equal to $t$, \\
\> \> - \nnn otherwise.
\end{tabbing}

\end{defi}

\begin{coro}\label{teoct}
The \constantterm problem is  NP-complete in the strong sense. 
\end{coro}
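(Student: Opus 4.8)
The plan is to mimic the strategy used for Theorem~\ref{teo1}, but to reduce from \prodpart instead of \subprodQ, since the constant term of a product of monic linear factors $\prod_{i\in I}(x+b_i)$ equals $\prod_{i\in I} b_i$ exactly when we choose the linear terms so that $b_i$ records the original data. More precisely, given an instance $\langle a_1,\dots,a_n\rangle$ of \prodpartQ, I would set $p(x)=\prod_{i=1}^n (x+a_i)$, and ask whether $p(x)$ has a factor whose constant term is a suitably chosen target $t$. By Observation~\ref{factorform}, every nontrivial factor of $p(x)$ has the form $q(x)=\prod_{i\in I}(x+a_i)$ for some $I\subseteq\{1,\dots,n\}$, and its constant term is $\prod_{i\in I} a_i$. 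So ``$p(x)$ has a factor with constant term $t$'' is equivalent to ``$\exists I: \prod_{i\in I} a_i = t$'', i.e.\ to the \subprod predicate with target $t$.

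The one subtlety is that \prodpart asks for a balanced \emph{two-sided} split, not merely a subset with a given product. I would handle this exactly as the correspondence between \prodpart and \subprod: writing $\Pi=\prod_{i=1}^n a_i$, a partition of $\{1,\dots,n\}$ into nonempty $I,J$ with $\prod_{i\in I}a_i=\prod_{j\in J}a_j$ exists iff $\Pi$ is a perfect square and there is a proper nonempty subset $I$ with $\prod_{i\in I}a_i=\sqrt{\Pi}$. So I would first check (in polynomial time) whether $\Pi$ is a perfect square; if not, output a trivial \nnn instance; if so, set $t=\sqrt{\Pi}$ and output $\langle p(x),t\rangle$. One must also exclude the degenerate choices $I=\emptyset$ and $I=\{1,\dots,n\}$: the empty product is $1$ and the full product is $\Pi$, and since the $a_i$ can be taken $\geq 2$ (a \prodpart instance with some $a_i=1$ reduces to the instance with that entry removed, and instances with $\Pi=1$ or not a square are trivially \nnn), the target $t=\sqrt\Pi$ with $1<t<\Pi$ is attained only by a genuine proper nonempty subset. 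This makes the reduction an if-and-only-if.

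For the complexity bookkeeping I would invoke Observation~\ref{compcoeff}: the polynomial $p(x)=\prod_{i=1}^n(x+a_i)$ is computed in $O(n^2)$ arithmetic operations, its coefficients are bounded by (a polynomial in $n$ times) products of the $a_i$, hence polynomially bounded when the $a_i$ are; and $t\le\Pi$ is likewise polynomially bounded in the strong-NP-completeness regime for \prodpartQ. Membership in NP is immediate: a certificate is the subset $I$, and one checks $\prod_{i\in I}a_i=t$ in polynomial time. Since \prodpart is NP-complete in the strong sense \cite{NG2010601}, and all numerical parameters produced by the reduction stay polynomially bounded, \constantterm is NP-complete in the strong sense as well.

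The main obstacle is not any deep argument but getting the reduction's domain exactly right: one has to be careful that the ``constant term $=t$'' predicate cannot be satisfied by the trivial factors $q(x)=1$ (constant term $1$) or $q(x)=p(x)$ (constant term $\Pi$), which is why the normalization to $a_i\ge 2$ and $1<t<\Pi$ is needed. An alternative, slightly cleaner route — which I would mention as a remark — is to reduce directly from \subprod exactly as in Theorem~\ref{teo1} with $k=0$: taking $p(x)=\prod(x+a_i)$, a factor has constant term $t$ iff some subproduct equals $t$, which is literally the \subprod question; but \subprod is only known NP-complete in the strong sense, so this gives the theorem just as well, and is arguably the shortest path. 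I would present the \subprod-based proof as the primary one, since it is a one-line adaptation of the proof of Theorem~\ref{teo1}.
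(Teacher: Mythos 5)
Your proposal is correct, and the route you ultimately designate as primary---reducing \subprod via the $k=0$ construction of Theorem~\ref{teo1} and observing that $q(0)$ is the constant term of $q(x)$---is exactly the paper's proof, which simply invokes Theorem~\ref{teo1} with $k=0$. The lengthy \prodpart detour, with its perfect-square check and the normalization needed to exclude the trivial factors, is sound but entirely unnecessary given that direct route.
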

\begin{proof}
We prove this result as a Corollary of Theorem \ref{teo1}. 
We reduce the \kfact problem with $k=0$ (NP-complete in the strong sense by Theorem \ref{teo1})
to the  \constantterm problem.

Let  $q(x)$ be any factor of $p(x)$.
Since $q(0)$ is equal to the constant term of $q(x)$ we conclude that $q(0)=h$ if and only if the 
constant term of $q(x)$ is equal to $h$.
\end{proof}

%
%
%

\begin{defi}[\kbal problem]\label{qrth}
Let $k\in \Z$ be  any fixed integer. The problem \kbal  is defined as follows.
\begin{tabbing}
\=\inputt: $p(x)\in\Z[x]$ \\
\> \outputt: \= - \yyy \= if $p(x)$ has two factors $q(x),r(x) \in\Z[x]$ such that \\
\> \> \> $p(x)=q(x)\cdot r(x)$ and $q(k)=r(k)$ \\
\> \> - \nnn otherwise.
\end{tabbing}
\end{defi}

\begin{theo}\label{teo2}
 \kbal is  NP-complete in the strong sense.
\end{theo}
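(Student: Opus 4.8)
The plan is to follow the blueprint of the proof of Theorem \ref{teo1}, this time reducing from \prodpartQ (Definition \ref{pp}), which is NP-complete in the strong sense. That \kbal belongs to NP is the usual guess-and-check: guess two polynomials $q(x),r(x)$ and verify $p(x)=q(x)r(x)$ and $q(k)=r(k)$; by the Landau--Mignotte bound a factor of $p(x)$ has coefficients of bit-length polynomial in the size of $p(x)$, so the certificate has polynomial size. The content is therefore in the hardness reduction.

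Given an instance $\langle a_1,\dots,a_n\rangle$ of \prodpartQ, I would first observe that, since the $a_i$ are positive integers, $\prod_{i=1}^{n}a_i=1$ occurs only when every $a_i=1$, in which case the instance is a \yyy instance iff $n\geq 2$ and is decided in constant time; so we may assume $\prod_{i=1}^{n}a_i>1$. Then, exactly as in Theorem \ref{teo1}, set $b_i=a_i-k$ and
$$p(x)=\prod_{i=1}^{n}(x+b_i),$$
a monic integer polynomial all of whose roots lie in $\Z$, whose coefficients are computed in $O(n^{2})$ operations (Observation \ref{compcoeff}), and whose numerical parameters are polynomially bounded in those of the \prodpart instance in exactly the same sense as in the proof of Theorem \ref{teo1}; the reduction outputs the \kbal instance $\langle p(x)\rangle$.

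For correctness I would invoke Observation \ref{factorform}: the factorizations $p(x)=q(x)r(x)$ are, up to a global sign, exactly $q(x)=\pm\prod_{i\in I}(x+b_i)$ and $r(x)=\pm\prod_{j\in J}(x+b_j)$ with equal signs and $\{I,J\}$ a partition of $\{1,\dots,n\}$. As in Theorem \ref{teo1} one has $\prod_{i\in I}(k+b_i)=\prod_{i\in I}a_i$, so $q(k)=\pm\prod_{i\in I}a_i$ and $r(k)=\pm\prod_{j\in J}a_j$ with the same sign, whence $q(k)=r(k)$ if and only if $\prod_{i\in I}a_i=\prod_{j\in J}a_j$. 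A \yyy instance of \prodpartQ supplies a partition of $\{1,\dots,n\}$ into two nonempty parts with equal products, hence factors $q,r$ with $q(k)=r(k)$, so $\langle p(x)\rangle$ is a \yyy instance of \kbal. Conversely, if $q(k)=r(k)$, then, setting $v=\prod_{i\in I}a_i=\prod_{j\in J}a_j$, we get $v^{2}=p(k)=\prod_{i=1}^{n}a_i>1$, so $v>1$; since $v$ is at once the product over $I$ and over $J$ of positive integers, this forces $I\neq\emptyset$ and $J\neq\emptyset$, i.e.\ $\{I,J\}$ is a valid product partition and $\langle a_1,\dots,a_n\rangle$ is a \yyy instance of \prodpartQ.

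The one point that needs care — and the reason \prodpartQ, with its requirement of \emph{two nonempty} subsets, is the natural problem to reduce from — is that the trivial factorizations $q(x)=\pm1$, $r(x)=\pm p(x)$ (equivalently, the case $I\in\{\emptyset,\{1,\dots,n\}\}$) must not produce spurious \yyy answers; this is exactly what the assumption $\prod_{i=1}^{n}a_i>1$ rules out, since it forces the common value $v$ to exceed $1$ and hence both index sets to be nonempty. Apart from this bookkeeping, the argument is a direct transcription of the proof of Theorem \ref{teo1}.
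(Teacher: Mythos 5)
Your proof is correct and follows essentially the same reduction from \prodpartQ to \kbal as the paper's, via $b_i=a_i-k$ and $p(x)=\prod_{i=1}^n(x+b_i)$. The only difference is that you additionally dispose of the degenerate cases --- the all-ones instance, the trivial factorizations with $I=\emptyset$, and the sign ambiguity left open by Observation \ref{factorform} --- which the paper's argument passes over when it silently equates factorizations $p=qr$ with partitions into two \emph{nonempty} subsets; this is a tightening of the same argument rather than a different route.
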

\begin{proof}
Let $k\in \Z$ be any fixed integer.
We reduce the  \prodpart problem (Definition \ref{pp})  to the \kbal problem.\\
Let  $\langle a_1,\dots,a_n \rangle$ be any instance of Product Partition. Let 
$b_i=a_i-k$ for $i=1,\dots,n$ and 
$$p(x)=\prod_{i=1}^n (x+b_i)$$
We now prove that $\langle p(x) \rangle$ is a \yyy instance for \kbal if and only if 
$\langle a_1,\dots,a_n \rangle$ is a \yyy instance for \prodpartQ.

$\langle p(x) \rangle$ is a \yyy instance of \kbal if and only if 
$p(x)$ has two factors $q(x),r(x) \in\Z[x]$ such that 
$p(x)=q(x)\cdot r(x)$ and $q(k)=r(k)$.

Or equivalently, by Observation \ref{factorform}, if and only if
the set $\{1,\dots,n \}$ can be partitioned into two nonempty subsets  $I$ and $J$
such that
\begin{equation}\label{eq:1}
\prod_{i\in I} (k+b_i)=\prod_{j\in J} (k+b_j)
\end{equation}
Since $b_i=a_i-k$ for $i=1,\dots,n$,  Equation \ref{eq:1} can be rewritten as follows
\begin{equation}
\prod_{i\in I} (k+a_i-k)=\prod_{j\in J} (k+a_j-k)
\end{equation}
and then
\begin{equation}\label{eq:2}
\prod_{i\in I} a_i=\prod_{j\in J} a_j
\end{equation}
Equation \ref{eq:2} holds if and only if $\langle a_1,\dots,a_n \rangle$ is a \yyy instance for  \prodpartQ.
\end{proof}

\begin{defi}[\eqct problem]\label{balancedsumcoro}
The  \eqct problem is defined as follows.
\begin{tabbing}
\=\inputt: $p(x)\in\Z[x]$ \\
\> \outputt:  \= - \yyy \= if there exist $q(x),r(x)\in\Z[x]$ such that $q(x)\cdot r(x)=p(x)$ and  \\
\> \> \>the constant term of $q(x)$ is equal to the constant term of $r(x)$, \\
\> \> - \nnn otherwise.
\end{tabbing}

\end{defi}

\begin{coro}\label{balancedconsttermcoro}
The \eqct problem is strongly NP-complete. 
\end{coro}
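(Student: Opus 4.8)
The plan is to mirror the proof of Theorem \ref{teo2}, reducing directly from \prodpartQ\ but choosing the shift so that the evaluation point of interest is $x=0$. First I would take an arbitrary instance $\langle a_1,\dots,a_n\rangle$ of \prodpartQ\ and set $p(x)=\prod_{i=1}^n(x+a_i)$ (i.e.\ the construction of Theorem \ref{teo2} with $k=0$, so $b_i=a_i$). By Observation \ref{factorform}, every factorization $p(x)=q(x)\cdot r(x)$ with $q,r\neq 1$ corresponds to a partition of $\{1,\dots,n\}$ into two nonempty sets $I,J$ with $q(x)=\prod_{i\in I}(x+a_i)$ and $r(x)=\prod_{j\in J}(x+a_j)$.

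Next I would observe that the constant term of $q(x)$ is exactly $q(0)=\prod_{i\in I}a_i$, and likewise the constant term of $r(x)$ is $\prod_{j\in J}a_j$. Hence $p(x)$ is a \yyy\ instance of \eqctQ\ if and only if there is a partition of $\{1,\dots,n\}$ into nonempty $I,J$ with $\prod_{i\in I}a_i=\prod_{j\in J}a_j$, which is precisely the condition that $\langle a_1,\dots,a_n\rangle$ be a \yyy\ instance of \prodpartQ. Since all the $a_i$ are positive, the constant terms are positive and there is no degenerate cancellation to worry about; the construction of $p(x)$ takes polynomial time by Observation \ref{compcoeff}, and its coefficients are bounded by a polynomial in the input size, so strong NP-hardness is preserved.

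For membership in NP one exhibits the two factors $q(x),r(x)$ (equivalently the partition $I,J$) as a certificate; verifying $q(x)\cdot r(x)=p(x)$ and comparing constant terms is polynomial. Putting these together yields that \eqctQ\ is strongly NP-complete. Alternatively, and even more briefly, one can simply invoke Theorem \ref{teo2} with $k=0$: since $q(0)$ equals the constant term of $q(x)$ and $r(0)$ equals the constant term of $r(x)$, the condition $q(0)=r(0)$ is identical to the condition on constant terms, so \eqctQ\ is literally the same problem as \kbalQ\ for $k=0$. I do not anticipate a real obstacle here; the only thing to be careful about is checking that the correspondence in Observation \ref{factorform} does not allow a spurious factorization where one factor is a unit — but that case gives $q(k)=r(k)$ only if the whole polynomial evaluates trivially, and in any event the nonempty-partition requirement is exactly matched on both sides, so the equivalence is clean.
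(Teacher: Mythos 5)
Your proposal is correct and matches the paper's approach: the paper's own proof is exactly your closing remark, namely that the constant term of a polynomial equals its value at $0$, so \eqctQ is the instance $k=0$ of \kbalQ and Theorem \ref{teo2} applies directly. The longer re-derivation of the reduction from \prodpartQ you give first is just an unwinding of that same argument and is also sound.
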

\begin{proof}
Since the constant term of any polynomial $p(x)$ is equal to $p(0)$, the proof of this Corollary 
follows from Theorem \ref{teo2} setting $k=0$.
\end{proof}

\begin{defi}[\eqsc problem]\label{coro2}
The  \eqsc problem is defined as follows.
\begin{tabbing}
\=\inputt: $p(x)\in\Z[x]$ \\
\> \outputt:  \= - \yyy \= if there exist $q(x),r(x)\in\Z[x]$ such that $q(x)\cdot r(x)=p(x)$ and  \\
\> \> \>the sum of all the coefficients of $q(x)$ is equal to \\
\> \> \>the sum of all the coefficients of $r(x)$, \\
\> \> - \nnn otherwise.
\end{tabbing}
\end{defi}

\begin{coro}\label{balancedsumcoeffcoro}
The \eqsc problem is strongly NP-complete. 
\end{coro}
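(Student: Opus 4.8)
The plan is to obtain this exactly as Corollary~\ref{balancedconsttermcoro} was obtained, only replacing the evaluation point $k=0$ by $k=1$. The key observation is the elementary identity that, for any polynomial $q(x)$, the sum of all its coefficients equals $q(1)$. Consequently, for a factorization $p(x)=q(x)\cdot r(x)$, the requirement that the sum of the coefficients of $q(x)$ equal the sum of the coefficients of $r(x)$ is literally the requirement $q(1)=r(1)$. Hence the \eqsc problem \emph{is} the \kbal problem specialised to $k=1$, and since Theorem~\ref{teo2} asserts strong NP-completeness of \kbal for every fixed $k\in\Z$, the statement follows by taking $k=1$.

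Made explicit, I would simply invoke the reduction from \prodpartQ{} used in the proof of Theorem~\ref{teo2}: from an instance $\langle a_1,\dots,a_n\rangle$ of \prodpartQ{} form $b_i=a_i-1$ and $p(x)=\prod_{i=1}^n(x+b_i)$, whose coefficients are computable in $O(n^2)$ operations by Observation~\ref{compcoeff} and remain polynomially bounded in the input length. By Observation~\ref{factorform}, writing $p(x)=q(x)\cdot r(x)$ with equal coefficient sums corresponds to partitioning $\{1,\dots,n\}$ into nonempty $I,J$ with $\prod_{i\in I}(1+b_i)=\prod_{j\in J}(1+b_j)$, i.e.\ $\prod_{i\in I}a_i=\prod_{j\in J}a_j$, which is exactly a \yyy instance of \prodpartQ. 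For membership in NP one argues as usual that a factor of a monic polynomial has coefficients of polynomial bit-size (Mignotte's bound), so a candidate pair $(q(x),r(x))$ is a polynomial-size certificate whose validity — that $q(x)\cdot r(x)=p(x)$ and that the two coefficient sums coincide — is checkable in polynomial time.

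I do not expect any real obstacle here: the only point deserving a word of care is that the reduction in the proof of Theorem~\ref{teo2} is uniform in $k$, so instantiating it at $k=1$ is legitimate, and that the instance it outputs has all numerical parameters bounded polynomially in the input length — both facts are already in hand, so the corollary is essentially a one-line consequence of Theorem~\ref{teo2}.
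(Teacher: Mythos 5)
Your proof is correct and follows exactly the paper's route: the identity that the sum of coefficients of $q(x)$ equals $q(1)$ reduces \eqsc to the $k=1$ instance of \kbalQ, whence Theorem \ref{teo2} applies. The extra details you supply (re-running the \prodpart reduction at $k=1$, the certificate size bound) are consistent with the paper but not needed beyond what its one-line argument already contains.
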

\begin{proof}
Since sum of the coefficients  of any polynomial $p(x)$ is equal to $p(1)$, the proof of this Corollary 
follows from Theorem \ref{teo2} setting $k=1$.
\end{proof}


\section{Open Questions}\label{sec4}

\subsection{Natural factors detection problem}
Let $\N$ denote the set of natural numbers (positive integer numbers) and $\N[x]$ the set 
of integer polynomials (polynomials with 
coefficients in $\N$). $\N[x]$ with the usual sum and product operations is a commutative semiring.
In fact, this is the free commutative semiring on a single generator $\{x\}$.

\begin{defi}\label{nfdef}
The  \nfdp  problem is defined as follows.
\begin{tabbing}
\=\inputt: $p(x)\in\Z[x]$ \\
\> \outputt: \= - \yyy \= if $p(x)$ has two factors $q(x),r(x) \in\N[x]$ such that \\
\> \> \> $p(x)=q(x)\cdot r(x)$, \\
\> \> - \nnn otherwise.
\end{tabbing}
\end{defi}

\begin{quest}\label{op1}
Is the \nfdp  problem NP-complete ?
\end{quest}

The following example shows a polynomial that is irreducible when considered as an element of $\N[x]$ and reducible when considered as an element of  $\Z[x]$.

\begin{exam}
Let $p(x)=1+x^3$. 
The complete factorization of $p(x)$ in $\Z[x]$ is $p(x)=(1+x)(1-x+x^2)$ while $p(x)$ is irreducible in $\N[x]$.
\end{exam}

In the next example we show that the prime factorization of integer polynomials in $\N[x]$ is not unique.

\begin{exam}
Let $p(x)=1 + x + x^2 + x^3 + x^4 + x^5$. 
The complete factorization of $p(x)$ in $\Z[x]$ is $p(x)=(1 + x) (1 - x + x^2) (1 + x + x^2)$.
Since $(1 + x) (1 - x + x^2)\in \N[x]$ and $(1 - x + x^2) (1 + x + x^2)\in \N[x]$,  then we have two distinct factorizations of $p(x)$ in $\N[x]$.
\begin{eqnarray*}
p(x) &=& (1 + x) (1 + x^2 + x^4)\\
	&=& (1 + x^3) (1 + x + x^2)
\end{eqnarray*}
\end{exam}

Our conjecture is that the \nfdp   problem is NP-complete but we have not been able
to prove it.

\subsection{Factors with specific coefficients detection problem}
 
Let $p(x)=c_0+c_{1}x+\cdots + c_n x^n$ be any integer polynomial. We denote 
by $coef(p(x),m)$, $0\leq m \leq n$, the coefficient $c_{m}$. For values of $m$ outside the interval
$[0,n]$,  $coef(p(x),m)$ is equal to $0$.
According to this definition,
$coef(p(x),n)=c_n$ is the coefficient of the monomial in $p(x)$ with maximum degree (for monic polynomials is always equal to $1$) and
$coef(p(x),0)=c_0$ is the constant term of $p(x)$.

The factor  with specific coefficients detection problem is defined as follows.

 \begin{defi}[\fwscQ] \label{fwscoef}
Let $m\geq 0$ be any fixed integer.
\begin{tabbing}
\=\inputt: $p(x)\in\Z[x]$ and $h\in\Z$\\
\> \outputt: \= - \yyy \= if $p(x)$ has a factor $q(x) \in\Z[x]$ such that \\
\>\>\> $coef(q(x),m)=h$,  \\
\> \> - \nnn otherwise.
\end{tabbing}
\end{defi}
By Corollary \ref{teoct}, we have that for $m=0$, \fwsc problem is NP-complete in the strong sense.
 In fact, when $m=0$ the problem is equivalent to the \constantterm problem.
 
We now define a problem that is a sort of combination of 
\subsum  and \subprod problems.

\begin{defi}[\subsumprodQ]\label{sspp}
Let $k$ be any fixed integer.
Given $n+1$ positive integers $a_1,\dots,a_n,t $ 
decide whether there exist $I \subseteq \{1,\dots,n \}$ such that

\begin{equation}\label{v1}
\sum_{\substack{i_1< i_2<\cdots < i_k  \\ i_j \in I,\  1\leq j \leq k}}^{} 
\left(\prod_{j=1}^{k} a_{i_j} \right)=t
\end{equation}
\end{defi}

Note that for $k=1$ the \subsumprod  problem is nothing but the \subsum problem (Definition \ref{subsum}) and then it is NP-complete. 

In the following theorem we prove that if the \fwsc  detection problem
is not  easier than  the \subsumprod problem.

To this extent we recall the Vieta's formulas (customized for 
monic polynomials over the integers with integer roots) that relate the roots of a polynomial to its coefficients.
\begin{theo}[Vieta's Formulas for monic polynomials over the integers with integer roots]
Let 
\begin{eqnarray*}
p(x) &=& \prod_{i=1}^n (x+a_i)\\
        &=& x^n + c_{n-1} x^{n-1} + \cdots + c_{1} x + c_{0}
\end{eqnarray*}
with $a_i \in \Z$. 
 We have
\begin{eqnarray*}
\forall k\in \{1,\dots,n \}: \quad c_{n-k}&=& \sum_{1\leq i_1<\cdots < i_k \leq n}^{} 
\left(\prod_{j=1}^{k} a_{i_j} \right) 
\end{eqnarray*}

\end{theo}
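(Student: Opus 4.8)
The plan is to prove the identity by multiplying out the product $\prod_{i=1}^{n}(x+a_i)$ with the distributive law and then collecting monomials of equal degree. First I would observe that expanding the $n$ binomials amounts to choosing, for each subset $S\subseteq\{1,\dots,n\}$, the term gotten by picking the constant $a_i$ from the factor $(x+a_i)$ for every $i\in S$ and picking $x$ from the remaining $n-|S|$ factors; summing over all such choices gives
$$\prod_{i=1}^{n}(x+a_i)=\sum_{S\subseteq\{1,\dots,n\}} x^{\,n-|S|}\prod_{i\in S}a_i .$$
If a fully formal justification is wanted, this step itself is a one-line induction on $n$ obtained by splitting off the last factor.

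Next I would group the terms on the right according to $|S|$. The monomials of degree $n-k$ are precisely those arising from subsets $S$ with $|S|=k$, so the coefficient $c_{n-k}$ of $x^{n-k}$ equals $\sum_{|S|=k}\prod_{i\in S}a_i$. Writing each $k$-element subset uniquely as $\{i_1,\dots,i_k\}$ with $1\le i_1<\cdots<i_k\le n$ converts this into $\sum_{1\le i_1<\cdots<i_k\le n}\prod_{j=1}^{k}a_{i_j}$, which is the claimed formula. The index range $k\in\{1,\dots,n\}$ is exactly the range of degrees $n-k$ lying strictly below $n$ (the leading coefficient $c_n=1$ corresponding to the empty choice $S=\varnothing$), and the extreme case $k=n$ specializes correctly to $c_0=\prod_{i=1}^{n}a_i$.

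As an alternative route I would induct on $n$. The base case $n=1$ gives $x+a_1$, so $c_0=a_1$, matching the statement. For the inductive step, write $p_n(x)=\sum_{j=0}^{n}c_j^{(n)}x^{j}$ with $c_n^{(n)}=1$ and use $p_{n+1}(x)=p_n(x)\,(x+a_{n+1})=x\,p_n(x)+a_{n+1}\,p_n(x)$; comparing coefficients yields $c_{n+1-k}^{(n+1)}=c_{n-k}^{(n)}+a_{n+1}\,c_{n+1-k}^{(n)}$. Substituting the inductive hypothesis and invoking the standard Pascal-type recurrence for elementary symmetric sums, $e_k(a_1,\dots,a_{n+1})=e_k(a_1,\dots,a_n)+a_{n+1}\,e_{k-1}(a_1,\dots,a_n)$, closes the induction (with the edge conventions $e_0\equiv 1$ and $e_{k}\equiv 0$ for $k>n$ covering the boundary indices).

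There is no genuine obstacle here: this is the classical Vieta / elementary-symmetric-polynomial identity, valid over any commutative ring, so negative or zero values of the $a_i$ cause no trouble. The only point requiring care is the index bookkeeping — keeping straight the correspondence between "coefficient of degree $n-k$" and "choosing $k$ of the constants", and checking the boundary cases $k=1$ and $k=n$. Since the paper needs the identity only to translate a factor $q(x)=\prod_{i\in I}(x+a_i)$ into the elementary symmetric expressions appearing in \subsumprodQ, the short combinatorial argument above is entirely sufficient.
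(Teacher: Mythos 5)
Your proof is correct: the expansion $\prod_{i=1}^{n}(x+a_i)=\sum_{S\subseteq\{1,\dots,n\}}x^{n-|S|}\prod_{i\in S}a_i$ followed by grouping subsets by cardinality is exactly the standard argument for this elementary-symmetric-function identity, and your inductive alternative via the Pascal-type recurrence is also sound. The paper itself merely recalls this theorem as a known classical fact and offers no proof, so there is nothing to contrast with; your argument fills that in adequately.
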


\begin{exam}
Let 
\begin{eqnarray*}
P_3(x) &=& (x+a_1)(x+a_2)(x+a_3)\\
        &=& x^3 +(a_1+a_2+a_3) x^{2} + (a_1 a_2+a_1 a_3+a_2 a_3) x + a_1 a_2 a_3
\end{eqnarray*}
\begin{eqnarray*}
P_4(x) &=& (x+a_1)(x+a_2)(x+a_3)(x+a_4)\\
        &=& x^4 +(a_1+a_2+a_3+a_4) x^{3} + (a_1 a_2+a_1 a_3+ a_1 a_4+a_2 a_3+a_2 a_4+a_3 a_4) x^2 \\
        &&+ (a_1 a_2 a_3+a_1 a_2 a_4+a_1 a_3 a_4+a_2 a_3 a_4) x + a_1 a_2 a_3 a_4
\end{eqnarray*}

\end{exam}

\begin{theo}\label{reduction}
The \subsumprod problem  is  reducible to 
the \fwsc detection problem.
\end{theo}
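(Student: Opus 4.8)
The plan is to give a polynomial-time many-one reduction from the \subsumprod problem with parameter $k$ to the \fwsc detection problem with parameter $m:=k$. Given an instance $\langle a_1,\dots,a_n,t\rangle$ of \subsumprodQ\ (all $a_i$ and $t$ positive), I would output the pair $\langle p(x),t\rangle$ with
\[
p(x)\;=\;\prod_{i=1}^{n}\bigl(a_i x+1\bigr)\ \in\ \Z[x].
\]
The \emph{reversed} linear factors $a_ix+1$ are used here deliberately in place of the $x+a_i$ of Observation~\ref{factorform}: by the Vieta-type expansion recalled above, the coefficient of $x^{k}$ in $\prod_{i\in I}(a_ix+1)$ equals the $k$-th elementary symmetric function $\sum_{i_1<\cdots<i_k,\ i_j\in I}\prod_{j=1}^{k}a_{i_j}$ of $\{a_i:i\in I\}$, and — this is the whole point — it sits at the \emph{fixed} coordinate $k$, independently of $|I|$, whereas for $\prod_{i\in I}(x+a_i)$ the same quantity appears at the $|I|$-dependent coordinate $|I|-k$. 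The polynomial $p(x)$ is produced by the incremental expansion of Observation~\ref{compcoeff} in $O(n^2)$ operations; its $n+1$ coefficients are elementary symmetric functions of the $a_i$, each at most $\prod_{i=1}^{n}(1+a_i)$ and hence of bit-length polynomial in the input, so this is indeed a polynomial-time reduction. (Note $p(x)$ is not monic, but Definition~\ref{fwscoef} does not impose monicity.)

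Next I would characterize the factors of $p(x)$. Each $a_ix+1$ has degree $1$ and content $\gcd(a_i,1)=1$, so it is primitive and therefore irreducible in $\Z[x]$; by Gauss's lemma $p(x)$ is primitive and the displayed product is already its factorization into irreducibles. Since the units of $\Z[x]$ are $\pm1$, unique factorization gives that every factor $q(x)$ of $p(x)$ is of the form $q(x)=\pm\prod_{i\in I}(a_ix+1)$ for some $I\subseteq\{1,\dots,n\}$ (when some of the $a_i$ coincide, several index sets may yield the same polynomial, which is harmless), and conversely every such product divides $p(x)$. This is the exact analogue of Observation~\ref{factorform} for $p(x)$, and establishing it carefully — irreducibility of primitive linear polynomials, plus the bookkeeping of repeated irreducibles — is the only step that requires any care.

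Finally I would put the two facts together. For $q(x)=\pm\prod_{i\in I}(a_ix+1)$ we have $coef(q(x),k)=\pm\sum_{i_1<\cdots<i_k,\ i_j\in I}\prod_{j=1}^{k}a_{i_j}$, and since all $a_i>0$ this sum is nonnegative while $t>0$; so $coef(q(x),k)=t$ forces the sign $+$ and $\sum_{i_1<\cdots<i_k,\ i_j\in I}\prod_{j=1}^{k}a_{i_j}=t$. Hence $\langle p(x),t\rangle$ is a \yyy instance of \fwscQ\ (with $m=k$) if and only if there is $I\subseteq\{1,\dots,n\}$ with $\sum_{i_1<\cdots<i_k,\ i_j\in I}\prod_{j=1}^{k}a_{i_j}=t$, i.e.\ if and only if $\langle a_1,\dots,a_n,t\rangle$ is a \yyy instance of \subsumprodQ, which is exactly what the reduction must guarantee. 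I do not anticipate a real obstacle: the one thing that could spoil the argument — that fixing the coordinate $m$ makes several different symmetric functions simultaneously readable there and thus creates spurious \fwsc solutions — is precisely what the choice of the reversed factors $a_ix+1$ rules out, leaving only the subset $I$ and a global sign as degrees of freedom, and positivity of $t$ eliminates the sign.
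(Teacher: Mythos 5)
Your reduction is correct, and while it follows the same basic strategy as the paper --- encode the $a_i$ as linear factors of a product polynomial, invoke the characterization of factors as sub-products, and use the Vieta/elementary-symmetric-function expansion to read the target sum off a coefficient of a factor --- your choice of \emph{reversed} factors $a_ix+1$ in place of the paper's $x+a_i$ is a substantive improvement, not a cosmetic one. With $p(x)=\prod_{i=1}^n(x+a_i)$, Vieta places the $k$-th elementary symmetric function $e_k(\{a_i\}_{i\in I})$ at the coefficient of $x^{|I|-k}$ in the factor $q_I(x)=\prod_{i\in I}(x+a_i)$, a position that varies with $|I|$; the paper's equation (\ref{vv1}) nevertheless identifies $coef(q(x),m)$ with the $m$-th elementary symmetric function of the selected roots, so a single fixed query index $m$ does not pick out the same symmetric function across factors of different degrees, and the stated equivalence does not go through as written. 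Your construction makes the position degree-independent (the coefficient of $x^k$ in $\prod_{i\in I}(a_ix+1)$ is exactly $e_k$ for every $I$), which is precisely what a many-one reduction to a problem with a \emph{fixed} coefficient index requires. The price you pay is that $p$ is no longer monic, so Observation \ref{factorform} is unavailable and you must re-derive the factor characterization; you do this correctly (primitivity of each $a_ix+1$, Gauss's lemma, unique factorization in $\Z[x]$, units $\pm1$), and you correctly dispose of the residual sign ambiguity using positivity of $t$. The polynomial-size bound on the coefficients of $p$ is also handled. In short: same skeleton, but your version closes a real gap in the paper's argument at the cost of a slightly longer factor-characterization step.
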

\begin{proof}

Let $k\in \Z$ be any fixed integer.
Let  $\langle a_1,\dots,a_n,t \rangle$ be any instance of \subsumprodQ. 
Let $m=k$, $h=t$ and
$$p(x)= \prod_{i=1}^n (x+a_i) $$

Let $\langle p(x),t\rangle$  be the corresponding instance of \fwscQ. 
We now prove that $\langle a_1,\dots,a_n,t \rangle$ is a \yyy instance of \subsumprod
if and only if $\langle p(x),t\rangle$ is a \yyy instance of \fwscQ.

$\langle p(x),t\rangle$ is a \yyy instance of \fwsc if and only if there exists a factor
 $q(x)\in \Z[x]$ of $p(x)$ such that $coef(q(x),m)=h$. 
 By Observation \ref{factorform}, any factor of $p(x)$ has the form
 
 \begin{equation}\label{eq:1100}
q(x) = \prod_{j=1}^{deg(q(x))} (x+b_{i_j})
\end{equation}
where $I= \{ i_1,\dots, i_{deg(q(x))} \}$ is a suitable subset of $\{ 1,\dots,n \}$.

By Vieta's formulas we know that 
$coef(q(x),m)$ can be written as
\begin{equation}\label{vv1}
\sum_{\substack{i_1< i_2<\cdots < i_m  \\ i_j \in I,\  1\leq j \leq m}}^{} 
\left(\prod_{j=1}^{m} b_{i_j} \right)
\end{equation}
This ends the proof
\end{proof}
We end this section with the following open question.
\begin{quest}\label{op2}
For which values of $m$ (other than the case $m=0$) is the \fwsc  problem NP-complete ?
\end{quest}

\section{Conclusion and further works}\label{sec5}
In this paper we have introduced and analyzed the computational complexity of 
the problem of detecting the existence 
of factors of integer polynomials satisfying specific additional constraints.
Even if detecting the existence of  factors of an integer polynomial
can be done in polynomial time,
it turns out that adding  simple constraints on factors leads to hard to solve variants of the problem. 
We prove that problems $Q1$ to $Q6$ listed at the end of Section \ref{sec1} 
are NP-complete in the strong sense.

Section \ref{sec5} provides some ideas for further works. In particular, we were surprised 
that Question \ref{op1} (to our knowledge) had not already been previously addressed  in the literature 
since the problem of factoring an integer polynomial over $\N$ instead of over $\Z$ seems to us a very natural
and interesting question to investigate.

\bibliographystyle{plain}
\bibliography{b}

\begin{thebibliography}{10}

\bibitem{CUCKER199921}
F.~Cucker, P.~Koiran, and S.~Smale.
\newblock A polynomial time algorithm for diophantine equations in one
  variable.
\newblock {\em Journal of Symbolic Computation}, 27(1):21--29, 1999.

\bibitem{garey1979computers}
M.~R. Garey and D.~S. Johnson.
\newblock {\em Computers and intractability: a guide to the theory of
  NP-Completeness (Series of Books in the Mathematical Sciences)}.
\newblock W. H. Freeman, 1979.

\bibitem{GJ78}
M.~R. Garey and D.~S.Johnson.
\newblock ``strong'' np-completeness results: motivation, examples, and
  implications.
\newblock {\em J. ACM}, 25(3):499--508, jul 1978.

\bibitem{GRENET2016171}
Bruno Grenet.
\newblock Bounded-degree factors of lacunary multivariate polynomials.
\newblock {\em Journal of Symbolic Computation}, 75:171--192, 2016.
\newblock Special issue on the conference ISSAC 2014: Symbolic computation and
  computer algebra.

\bibitem{VANHOEIJ2002167}
M.~Van Hoeij.
\newblock Factoring polynomials and the knapsack problem.
\newblock {\em Journal of Number Theory}, 95(2):167--189, 2002.

\bibitem{DBLP:conf/latin/Kaltofen92}
E.~Kaltofen.
\newblock Polynomial factorization 1987-1991.
\newblock In Imre Simon, editor, {\em {LATIN} '92, 1st Latin American Symposium
  on Theoretical Informatics, S{\~{a}}o Paulo, Brazil, April 6-10, 1992,
  Proceedings}, volume 583 of {\em Lecture Notes in Computer Science}, pages
  294--313. Springer, 1992.

\bibitem{KaltofenK05}
Erich Kaltofen and Pascal Koiran.
\newblock On the complexity of factoring bivariate supersparse (lacunary)
  polynomials.
\newblock In Manuel Kauers, editor, {\em ISSAC}, pages 208--215. ACM, 2005.

\bibitem{Kar72}
R.~Karp.
\newblock Reducibility among combinatorial problems.
\newblock In R.~Miller and J.~Thatcher, editors, {\em Complexity of computer
  computations}, pages 85--103. Plenum Press, 1972.

\bibitem{KarpinskiS99}
Marek Karpinski and Igor~E. Shparlinski.
\newblock On the computational hardness of testing square-freeness of sparse
  polynomials.
\newblock In Marc P.~C. Fossorier, Hideki Imai, Shu Lin, and Alain Poli,
  editors, {\em AAECC}, volume 1719 of {\em Lecture Notes in Computer Science},
  pages 492--497. Springer, 1999.

\bibitem{Lenstra82factoringpolynomials}
A.~K. Lenstra, H.~W. Lenstra, and L.~Lovász.
\newblock Factoring polynomials with rational coefficients.
\newblock {\em Mathematische Annalen}, 261:515--534, 1982.

\bibitem{NG2010601}
C.T. Ng, M.S. Barketau, T.C.E. Cheng, and Mikhail~Y. Kovalyov.
\newblock Product partition and related problems of scheduling and systems
  reliability: computational complexity and approximation.
\newblock {\em European Journal of Operational Research}, 207(2):601--604,
  2010.

\bibitem{Plaisted77}
David~A. Plaisted.
\newblock Sparse complex polynomials and polynomial reducibility.
\newblock {\em J. Comput. Syst. Sci.}, 14(2):210--221, 1977.

\bibitem{Schubert}
F.T. Schubert.
\newblock De inventione divisorum.
\newblock {\em Nova Acta Academiae Scientiarum Petropolitanae}, 11:172--182,
  1793.

\bibitem{yao78}
A.~C.-C. Yao.
\newblock New algorithms in bin packing.
\newblock Technical Report CS-TR-1978-662, Stanford University, Department of
  Computer Science, September 1978.

\end{thebibliography}

\end{document}